\newtheorem{theorem}{Theorem}
\newtheorem{lemma}{Lemma}
\let\oldsqrt\sqrt
\def\sqrt{\mathpalette\DHLhksqrt}
\def\DHLhksqrt#1#2{%
\setbox0=\hbox{$#1\oldsqrt{#2\,}$}\dimen0=\ht0
\advance\dimen0-0.2\ht0
\setbox2=\hbox{\vrule height\ht0 depth -\dimen0}%
{\box0\lower0.4pt\box2}}
\renewcommand{\r}{\rho}
\renewcommand{\=}{\; \dot{=} \;}
\newcommand{\p}{\partial}
\newcommand{\A}{\mathcal{A}}
\newcommand{\E}{\mathcal{E}}
\newcommand{\Rt}{\mathbb{R}^3}
\newcommand{\sph}{\mathbb{S}^2}
\newcommand{\dv}{\, dv}
\newcommand{\Lsp}{\Delta}
\newcommand{\Hi}{\mathcal{H}}
\newcommand{\C}{S_{t}}
\newcommand{\D}{\mathcal{D}}
\title{The wave equation on the extreme Reissner-Nordstr\"om black hole}
\author{Sergio Dain$^{1,2}$ and  Gustavo Dotti$^{1}$\\ 
  \\
  $^1$Facultad de Matem\'atica, Astronom\'{i}a y F\'{i}sica, FaMAF,\\
  Universidad Nacional de C\'ordoba,\\
  Instituto de F\'{\i}sica Enrique Gaviola, IFEG, CONICET,\\
  Ciudad Universitaria, (5000) C\'ordoba, Argentina.  \\
  $^{2}$Max Planck Institute for Gravitational Physics,\\
  (Albert Einstein Institute), Am M\"uhlenberg 1,\\
  D-14476 Potsdam Germany. }
\begin{document}
\maketitle

\begin{abstract}
  We study the scalar wave equation on the open exterior region of an extreme
  Reissner-Nordstr\"om black hole and prove that, given compactly supported
  data on a Cauchy surface orthogonal to the timelike Killing vector field, the
  solution, together with its $(t,s,\theta,\phi)$ derivatives of arbitrary
  order, $s$ a tortoise radial coordinate, is bounded by a constant that
  depends only on the initial data. Our technique does not allow to study
  transverse derivatives at the horizon, which is outside the coordinate patch
  that we use.  However, using previous results that show that second and
  higher transverse derivatives at the horizon of a generic solution grow
  unbounded along horizon generators, we show that any such a divergence, if
  present, would be milder for solutions with compact initial data.
\end{abstract}

\section{Introduction}
\label{sec:introduction}

Extreme black holes lie in the boundary between black holes and naked
singularities. Since black holes are believed to be astrophysically relevant,
whereas naked singularities are considered unphysical, the issue of stability
of extreme black holes is key in understanding the process of gravitational
collapse.  Black hole stability is a longstanding open problem in General
Relativity.  The pioneering works of Regge, Wheeler \cite{Regge:1957td},
Zerilli \cite{Zerilli:1970se} \cite{Zerilli:1974ai} and Moncrief
\cite{Moncrief:1975sb} determined the modal linear stability of
electro-gravitational perturbations in the domain of outer communication of the
spherically symmetric electro-vacuum black holes, by ruling out exponential
growth in time.  Since then, a lot of effort has been made to establish more
accurate bounds on linear fields.  In particular, analyzing the scalar wave
equation on the black hole background provides useful insight into the more
complex problem of linear gravitational perturbations. Kay and Wald
(\cite{wald:1056} \cite{wald:218} \cite{Kay:1987ax}) obtained uniform
boundedness for solutions of the wave equation on the exterior of the
Schwarzschild black hole. In recent years this result has been extended to the
non-extreme Kerr black hole (see the review articles \cite{Dafermos:2008en},
\cite{Dafermos:2010hd} and references therein.)  The purpose of this work is to
place pointwise bounds on scalar waves on the exterior region of an extreme
Reissner-Nordstr\"om black hole.  The interest in Reissner-Nordstr\"om black
holes lies in the fact that they share the complexity of the global structure
of the more relevant Kerr black holes and, due to spherical symmetry, are far
more tractable than the rotating holes. The modal stability of the outer region
of Reissner-Nordstr\"om black holes under linear perturbations of the metric
and electromagnetic fields was established in \cite{Zerilli:1970se}
\cite{Zerilli:1974ai} \cite{Moncrief:1975sb}, both for the extreme and
sub-extreme cases. The modal {\em instability} of the Reissner-Nordstr\"om
naked singularity, and also of the black hole inner static region, was proved
only recently in
\cite{Dotti:2010uc}. 

The wave equation on extreme Reissner-Nordstr\"om black holes has recently been
studied by Aretakis in a series of relevant articles
\cite{Aretakis:2011ha,Aretakis:2011hc,Aretakis:2010gd}, where it was found that
second and higher order transverse derivatives at the horizon grow without
bound along the horizon generators (see also \cite{Lucietti:2012sf} were
similar results were found for the Teukolsky equation on an extreme Kerr black
hole). One of the motivations of our article is understanding the meaning of
these instabilities. More specifically, we wonder if they arise in the
evolution of fields from data of compact support on a $t=$ constant Cauchy
surface (``compact data", for short), which is a subclass in
\cite{Aretakis:2010gd,Aretakis:2011hc,Aretakis:2011ha} for which, as we show
below, the proof of instability there fails.  Although we do not prove that the
horizon instability is absent for compact data, we do show that, if present, is
milder. On the other side, we get a remarkably simple proof of pointwise
boundedness of fields of compact data and their partial derivatives of any
order in $(t,s,\theta,\phi)$ coordinates on the open black hole exterior
region. These results are stated under Theorem \ref{t:1} in Section
\ref{sec:main-result}, where their relevance to the problem of spherical
gravitational collapse is discussed.  Theorem \ref{t:1} is proved in Section
\ref{sec:behav-massl-scal}.

\section{Main results}
\label{sec:main-result}

Consider the exterior region $\D$ of the extreme Reissner-Nordstr\"om black
hole. This region is described in isotropic coordinates $(t,\rho,\theta,\phi)$
by the metric
\begin{equation} \label{metric1}
g = -N^{-2} dt^2 +N^{2} \left(d \rho^2+ \rho^2 (d\theta ^2 + \sin ^2 \theta d
  \phi^2) \right), \;\; N=1+\frac{m}{\rho},
\end{equation}
where the positive constant $m$ represents the total mass of
the spacetime, which equals  the absolute value of the total electric
charge. The electromagnetic field
\begin{equation} \label{emf}
{\cal F} =\pm \frac{m}{(m+\rho)^2} \; dt \wedge d\rho,
\end{equation}
together with this metric, solves the Einstein-Maxwell equations. Note that the
isotropic coordinate $\rho$ differs from the standard radial coordinate
\begin{equation}
r=\rho+m, 
\end{equation}
that gives the area $A= 4 \pi r^2$ of the spheres spanned by acting on a point
with the $SO(3)$ isometry subgroup. Instead of $\rho$, it is often more
convenient to use the ``tortoise´´ radial variable $-\infty < s < \infty$
defined by
\begin{equation}
  \label{defs}
  \frac{ds}{d\r}=N(\r)^2.
\end{equation}
We choose the integration constant such that
\begin{equation} \label{s}
s = \r -\frac{m^2}{\r}+  2m\log\left(\frac{\r}{m}\right). 
\end{equation}
Isotropic coordinates cover only the open exterior region $\D$ of the black
hole.  Figure \ref{f:1} exhibits the well known conformal diagram of an extreme
Reissner-Nordstr\"om black hole (see \cite{Hawking73}, \cite{Carter73} and
references therein), region $\D$ appears shaded. The unshaded region is the
black hole interior, proved to be linearly unstable in \cite{Dotti:2010uc}.
$S$ is a generic $t=$ constant surface, it is a Cauchy surface for $\D$ and a
complete Riemannian manifold with topology $\sph \times \mathbb{R}$.  Its
induced metric approaches that of the cylinder as $\rho \to 0^+$, limit in
which the area of the isometry spheres tend to $A= 4 \pi m^2$.  We denote by
$i_+$ and $i_-$ the future and past timelike infinity of $\D$ respectively. The
asymptotically flat spacelike infinity is denoted by $i_0$, and the
asymptotically cylindrical end of $t=$ constant surfaces is denoted $i_c$.
Note that the surface $S$, being orthogonal to the Killing vector $\p / \p t$,
is asymptotically null at $i_c$.\\

\begin{figure}
\centerline{\includegraphics{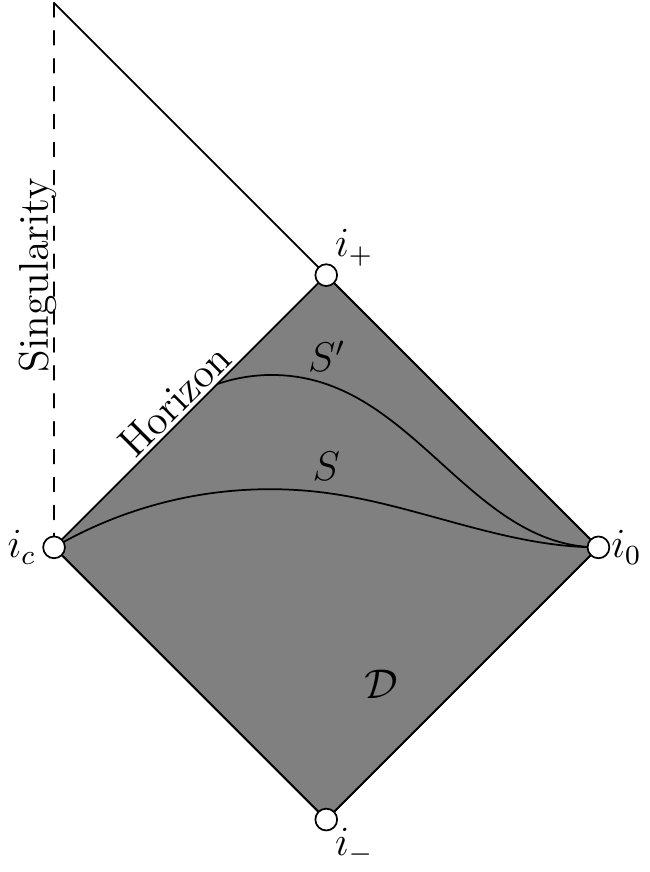}}
\caption{Conformal diagram for the extreme Reissner-Nordstr\"om black hole}
\label{f:1}
\end{figure}

In this work we study the scalar  wave equation on $\D$
\begin{equation}
  \label{laplacian}
 \Box_g \Phi=0,
\end{equation}
with initial data on $S$
\begin{equation} \label{id}
  \phi=\Phi|_S, \quad \chi = \dot \Phi|_S,
\end{equation}
where the dot denotes derivative with respect to $t$.  The existence and
uniqueness of the solution of the Cauchy problem (\ref{laplacian})-(\ref{id})
on a curved background is well established (see, for example, \cite{Hawking73},
also \cite{Friedlander}).
We prove  the following
\begin{theorem}
\label{t:1}
Let $\Phi$ be a solution of the wave equation (\ref{laplacian}) on the open exterior
region $\D$ of an extreme Reissner-Nordstr\"om black hole, which has smooth
initial data (\ref{id}) of compact support on the Cauchy surface $S$. Then,
there exists a constant $C$, which depends only on the initial data (\ref{id}),
such that, in $\D$,
  \begin{equation}
    \label{boundphi}
    |\Phi|\leq \frac{C}{\rho+m}.
  \end{equation}
  All higher partial derivatives with respect to the coordinates $(t,s, \theta,
  \phi)$ are similarly bounded in $\D$. Namely, for any $ \alpha_1 \alpha_2 ...$ there
  exists a constant $C_{\alpha_1 \alpha_2 ....}$ that depends on the initial
  data, such that
\begin{equation}
  \label{T1bound}
  |\partial_{\alpha_1} \partial_{\alpha_2} ... \Phi |\leq \frac{C_{\alpha_1 \alpha_2...}}{\rho+m}.
\end{equation}
\end{theorem}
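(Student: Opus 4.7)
The plan is to work with the rescaled field $\Psi \= (\r+m)\Phi = r\Phi$, so that the bound $|\Phi|\leq C/(\r+m)$ is equivalent to the uniform bound $|\Psi|\leq C$ on $\D$. In the $(t,s,\theta,\phi)$ chart the metric takes the block form
\begin{equation}
g = f(-dt^2+ds^2) + r^2(d\theta^2 + \sin^2\theta\, d\phi^2), \qquad f = N^{-2} = \left(1-\frac{m}{r}\right)^2,
\end{equation}
and a direct computation shows that $\Box_g\Phi = 0$ is equivalent to
\begin{equation}
-\p_t^2\Psi + \p_s^2\Psi + \frac{f}{r^2}\Lsp_{\sph}\Psi - U(r)\,\Psi = 0,
\end{equation}
where $U(r)\geq 0$ is smooth, bounded, and vanishes at both asymptotic ends $s\to\pm\infty$. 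The problem thus reduces to analyzing a wave equation on $\mathbb{R}_t\times \mathbb{R}_s\times \sph$ with a manifestly nonnegative effective potential term.

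Because $T = \p/\p t$ is a timelike Killing field on $\D$, contracting the stress-energy tensor of $\Psi$ with $T$ and integrating over a $t$-slice yields a positive-definite conserved energy
\begin{equation}
E(t) = \frac{1}{2}\int_{\mathbb{R}}\int_{\sph}\left[(\p_t\Psi)^2 + (\p_s\Psi)^2 + \frac{f}{r^2}|\nabla_{\sph}\Psi|^2 + U\Psi^2\right]ds\,d\Omega,
\end{equation}
finite thanks to the compactness of the initial support. Since $T$ and the three rotational Killing fields commute with $\Box_g$, every field of the form $T^k L^\alpha \Phi$ solves the wave equation with smooth compactly supported data, and its energy is likewise conserved; this gives, at each $t$, uniform $L^2$-control of $\Psi$ together with its $\p_t$- and angular derivatives, and their first $\p_s$-derivatives, in terms of the initial data.

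To pass from $L^2$ to $L^\infty$ I would use the Sobolev embedding $H^2(S)\hookrightarrow L^\infty(S)$ on the $3$-dimensional spatial slice $S\cong\mathbb{R}\times\sph$. The missing pure $\p_s$-derivatives, which do not come from any Killing field, are recovered from the wave equation itself: it is used to trade $\p_s^2\Psi$ for $\p_t^2\Psi$, $\Lsp_{\sph}\Psi$ and $\Psi$. Iterating this trade and commuting with $T$ and the angular Killing fields yields every higher $(t,s,\theta,\phi)$ derivative, giving the stated estimate $|\p_{\alpha_1}\p_{\alpha_2}\cdots\Phi|\leq C_{\alpha_1\alpha_2\cdots}/(\r+m)$.

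The hardest step is the base $L^2$-bound on $\Psi$ itself on a slice: the conserved energy only dominates $U^{1/2}\Psi$, and $U$ degenerates at both ends of $S$. Closing this gap requires either a Hardy-type inequality adapted to the spatial metric $ds^2/N^2 + r^2 d\Omega^2$---which has an asymptotically flat end at $\r\to\infty$ and a cylindrical end at $\r\to 0^+$---or a more refined exploitation of the finite speed of propagation, which keeps $\Psi(t,\cdot)$ compactly supported in $s$ at each $t$. This is exactly the point at which the absence of a redshift on the extreme horizon becomes the main analytical obstacle, and where the restriction to the open exterior $\D$, in which $s=-\infty$ appears only as an asymptotic limit, is essential in order to avoid the Aretakis-type instabilities of transverse derivatives at the horizon.
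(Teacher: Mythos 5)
Your framework coincides with the paper's: the same rescaling $F=(\r+m)\Phi$, the reduction to $\ddot F+\A F=0$ with $\A=-\p_s^2+V_1-V_2\Delta$ on the cylinder, conserved energies obtained by commuting with the Killing fields, and a Sobolev-type embedding $H^2\hookrightarrow L^\infty$ on the slice (the paper's Lemma \ref{l:boundfcylinder}) to convert $L^2$ control into pointwise control. However, you have correctly located but \emph{not closed} the decisive gap: the conserved energy controls only $\|\p_s F\|$, $\|\p_t F\|$ and the \emph{weighted} quantities $\|V_1^{1/2}F\|$, $\|V_2^{1/2}\nabla_{\sph}F\|$, whereas the embedding requires the unweighted norms $\|F\|$ and $\|\Delta F\|$. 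Neither of your proposed fixes works as stated. A Hardy inequality on the slice yields at best $\int F^2/s^2\lesssim\int(\p_s F)^2$, i.e.\ again a weighted norm that decays at both ends exactly where $V_1\sim 2m|s|^{-3}$ does, so it cannot produce $\|F\|$. Finite speed of propagation gives a support of size $O(t)$, hence only $\|F_t\|^2\lesssim t^2\int F_t^2/s^2\lesssim t^2\,\E[F]$, a $t$-dependent bound that defeats the purpose of the theorem.

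The paper's resolution, which is the actual content of its Sections \ref{sec:conserved-energies} and \ref{sec:integrating-time}, is the time-integral (or ``inverse commutator'') trick of Dafermos--Rodnianski: one constructs $\tilde F$ with $\dot{\tilde F}=F$ and $\ddot{\tilde F}+\A\tilde F=0$, so that the conserved energy
\begin{equation}
\E[\tilde F]=\int_{S_t}\bigl[F^2+\tilde F\,\A(\tilde F)\bigr]\,dv
\end{equation}
dominates $\|F_t\|^2$ uniformly in $t$, and likewise $\E[\Delta\tilde F]$ dominates $\|\Delta F_t\|^2$. Proving that $\tilde F$ exists with \emph{finite} energy is nontrivial: it requires inverting $\A$ on compactly supported data (the paper's Lemma \ref{l:inverseA}, via Lax--Milgram in the energy Hilbert space, plus elliptic regularity and the asymptotics (\ref{beh}) of $\A^{-1}q$), and then setting $\tilde F=\p_t(-\A^{-1}F)$. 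This step is also where the restriction to compactly supported data on the complete Cauchy surface $S$ enters essentially; without it the relevant energy (cf.\ $\E[\tilde{\tilde F}]$ in (\ref{eq:Etildetilde})) can diverge. Your proposal would become a proof once this construction is supplied; as written, the base-case $L^2$ bounds on $\Psi$ and $\Delta\Psi$ remain unproved, and the same deficiency propagates to your treatment of $\p_s^2\Psi$ and all higher derivatives, since trading $\p_s^2$ for $\p_t^2$, $\Delta$ and the potential terms reintroduces the uncontrolled unweighted norms.
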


Theorem \ref{t:1} establishes that the spacetime $\D$ is stable with respect to
this class of initial data.  A key simplification in the study of extreme black
holes, compared to non-extreme ones, is the existence of complete exterior
Cauchy surfaces such as $S$. This has been extensively used in the present work
because it allows us to avoid the delicate issues related to the behaviour of
fields near the horizon.  We should stress that theorem \ref{t:1} applies to
the evolution of data with compact support on $S$.
%
%
By finite speed propagation, it is clear that the restriction of
$\Phi$ to any $t=$ constant slice will have compact support, and, by
smoothness, it will be bounded.  The non-trivial statement in Theorem \ref{t:1} is that
there exists a $t$-independent bound for $\Phi$.


The scalar wave equation on the exterior of an extreme Reissner-Nordstr\"om
black hole has recently been treated by Aretakis in \cite{Aretakis:2010gd},
\cite{Aretakis:2011hc} and \cite{Aretakis:2011ha} by evolving data from a
non-Cauchy surface such as $S'$ in Figure \ref{f:1}.  This surface intersects
the horizon on one end, goes to spacelike infinity on the other end, and is
suitable to analyze the stability of the exterior region of a spherical
collapse of an extreme charged body (see Figure \ref{f:2}).  Among the many
estimates found in these articles, two of them relate closely to our result.
The first one is the pointwise boundedness of the solution in the domain of
dependence of $S'$, in terms of initial data on $S'$, given in Theorem 4 in
\cite{Aretakis:2011hc}.  A weaker bound than  (\ref{boundphi}), $|\Phi|<C$, follows from
this theorem and the fact that, for the kind of initial data we consider,
$\Phi$ has compact support $K$ in the region between $S$ and $S'$.
 There are, however, two important motivations to study
compact data fields.
The first one is that the proof of Theorem \ref{t:1} is considerably simpler
than the boundedness proof  in \cite{Aretakis:2011ha}, since it uses uses only arguments involving the canonical conserved
energy of the wave equation, this being possible due to the existence of the 
above mentioned complete  Cauchy surface.  The cost of this simplification is that  
the results in \cite{Aretakis:2011hc}  apply to a wider class of data on $S'$ than those coming 
from the evolution of fields with compact support on $S$, as we
discuss below.

The second motivation is determining if the blow up result for second and higher
transverse derivatives at the horizon along the horizon generators, reported
in Theorem 6 in \cite{Aretakis:2011hc}, and  generalized to scalar wave equations 
on spacetimes with degenerate Killing 
horizons in \cite{Aretakis:2012ei}, holds for compact data.
To state this result we  need to switch to a  coordinate system that covers the horizon:
\begin{equation}
\{ v=t +s (\rho), r=\rho+m, \theta,\phi \}.
\end{equation}
In these coordinates the metric (\ref{metric1}) has the form
\begin{equation}
g = -(1-m/r)^2 \; dv^2 + 2 dv dr + r^2 \; (d \theta^2 + \sin^2 \theta \; d \varphi^2), 
\end{equation}
and the entire diagram in Figure \ref{f:1} is covered. The horizon is located
at $r=m$, and region ${\cal D}$ corresponds to $r>m$. The Killing vector field
$\p/\p v$ becomes null at $r=m$, its integral lines are the horizon generators,
and $v$ is an affine parameter. Note that $\p/\p r|_{r=m}$ is a null vector,
orthogonal to the $SO(3)$ orbits, and has unit inner product with $\p/\p v$.
We call $ \p/\p r |_{r=m}$ a {\em transverse derivative at the horizon.}
Theorem 6 in \cite{Aretakis:2011hc} states that second and higher order
transverse derivatives of a solution of the scalar field equation diverge as $v
\to \infty$ along horizon generators.  Although this result holds for generic
data in the class analyzed therein, it fails for those fields which evolve from
data with compact support on $S$, as we discuss in detail in Section
\ref{sec:transv-deriv-at}.  For fields evolving from compact data, we could not
rule out these divergences, although we showed that, if present, they would be
milder than those reported by Aretakis.

In what follows we analyze the behaviour of fields with initial data with
compact support on $S$ of Figure \ref{f:1}.  Consider the spherically symmetric
collapse of charged matter. It is possible to arrange the matter model in such
a way that the exterior region is a portion of an extreme Reissner-Nordstr\"om
black hole.  One way of constructing such spacetime is by using a thin shell of
matter, this has been extensively studied in \cite{boulware73}. Other models
are those of charged dust collapse (see \cite{0264-9381-7-6-008} and references
therein). For the present discussion it is enough to know that such a
construction is feasible with some matter model. We show schematically the
conformal diagram of a charged collapse spacetime in Figure \ref{f:2}, where we
have avoided drawing the singularity, which may have a complicated structure,
irrelevant to our purposes.
\begin{figure}[h]
\centerline{\includegraphics[height=8cm]{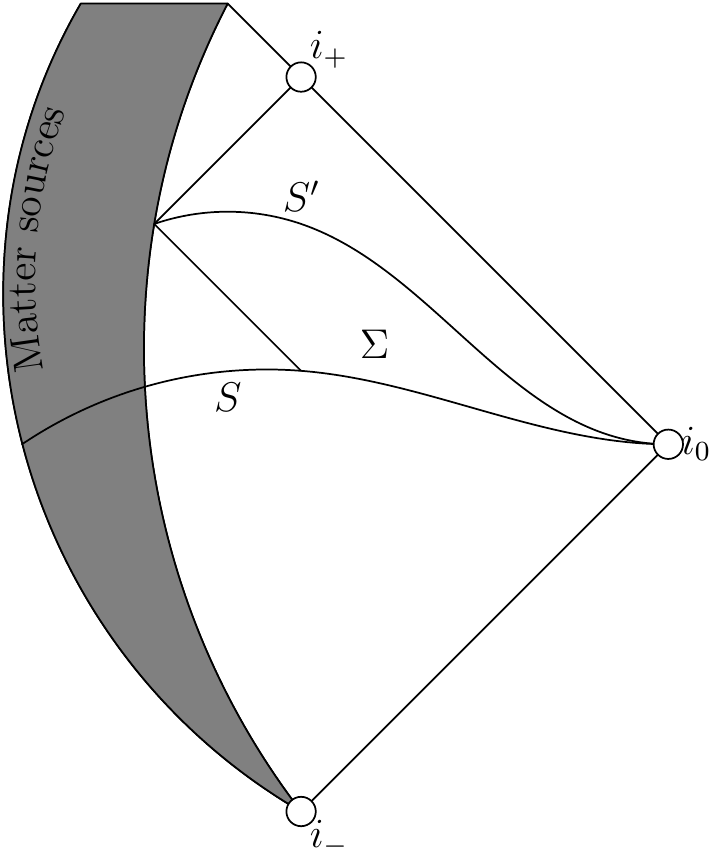}} 
\caption{Conformal diagram for the collapse of charged dust.}
\label{f:2}
\end{figure}
In a realistic collapse of matter there is always a surface like $S'$, for
which the intersection ${\cal I}$ of its future domain of dependence with the
domain of outer communications agrees with that of $S'$ in Figure \ref{f:1}.
Note that the results in \cite{Aretakis:2011ha,Aretakis:2011hc} apply to ${\cal I}$ for 
data  given on $S'$ which may be non trivial at the horizon.  In
Figure \ref{f:2}, $S$ is a Cauchy surface of $\Rt$ topology
that enters the matter region, and $\Sigma \subset S$ is the minimal subset of $S$ whose future contains $S'$. 
The results  \cite{Aretakis:2011ha,Aretakis:2011hc} imply that generic data of compact support within
the vacuum region of $S$ will evolve in a bounded wave in the vacuum portion of this spacetime outside the horizon.
This is so because the portion of this region lying between $S$ and $S'$ is compact and, as the data 
of this field on $S'$ is in the Aretakis class, the field is also bounded in the outer region in the future of $S'$.\\
Our proof of boundedness is simpler, but can only be applied to the spherical collapse model if we restrict to 
 those fields with data of compact support in  $\Sigma \subset S$ (instead of $S$),
for which the evolution in the outer region is identical to that of a field in the Reissner-Nordstr\"om geometry.  
Physically, these fields are characterized by the fact that they reach the horizon after the surface of the collapsing star has crossed it.
We prove below that the growth of these fields along horizon generators is milder than the growth of those fields which enter 
the matter region earlier.
The fact that fields initially supported in
$\Sigma$ are better behaved than those entering the matter before the horizon
is formed is an aspect of the stability of the spherical collapse worth
pointing out.

\section{Behaviour of massless scalar fields} 
\label{sec:behav-massl-scal}

\subsection{Recasting the wave operator}
\label{recast}
Consider the wave equation (\ref{laplacian}) on the exterior ${\cal M}$ of the
extreme Reissner-Nordstr\"om background, whose metric in isotropic coordinates
is (\ref{metric1}).  Defining
\begin{equation}
  \label{defF}
  \Phi = \frac{F}{(\r+m)}, 
\end{equation}
we obtain  
\begin{equation} \label{box2}
-\tfrac{\r^2}{\r+m} \; \Box_g \Phi = \ddot F +\;  \A \; F,
\end{equation}
where a dot means $\p_t$,
\begin{equation} \label{A}
\A = -\p_s^2 + \left( \frac{2m \r^3}{(\r+m)^6}   - \frac{\r^2}{(\r+m)^4} \Delta \right)
 =: -\p_s^2 + \left( V_1 - V_2 \; \Delta \right)
\end{equation}
 $s$ is the tortoise radial coordinate introduced in (\ref{s}),
and $\Lsp$ is the standard Laplacian on the unit sphere. 
From (\ref{s}) we deduce 
\begin{equation}
  s \sim -\frac{m^2}{\r} \; \text{ as } \r \to 0^+ , \;\; s \sim \r  \; \text{
    as } \r \to \infty. 
\end{equation}
The potentials  $V_1$ and $V_2$ are positive, bounded
\begin{equation} \label{Vbounds}
0 < V_1 <  \frac{1}{32 \, m^2} , \quad 0 < V_2 < \frac{1}{16 \, m^2},
\end{equation}
and have the following fall off 
\begin{equation} 
V_2 \sim  s^{-2}, \quad  V_1 \sim 2m  |s|^{-3} \text{ as }  s \to \pm \infty.
\end{equation}
The symmetry in the asymptotic expressions above is not coincidental, $V$ is an
even function of $s$. The origin of this symmetry is the conformal isometry $C$
of extreme Reissner-Nordstr\"om noticed in \cite{couch84}, given by
\begin{equation} \label{ci}
C(t,\r,\theta,\phi)=(t,m^2/\r,\theta,\phi),
\end{equation}
Under this map the pullback of the metric and electromagnetic fields are
\begin{equation}
\tilde g_{ab} = \left( \tfrac{m}{\r} \right)^2  g_{ab},   \quad \tilde {\cal F}_{ab} = -
{\cal F}_{ab}.
\end{equation}

  Since the equation $\Box \Phi - (R/6) \Phi$ is conformally invariant
with conformal weight minus one, and the Ricci scalar of an electro-vacuum
solution vanishes (thus $R = \tilde R =0$), it follows that, if
$\Phi(t,\r,\theta,\phi)$ is a solution of $\Box \Phi =0$, then so is
\begin{equation} \label{fip}
'\Phi(t,\r,\theta,\phi) = (m/\r) \Phi(t,m^2/\r,\theta,\phi).
\end{equation}
For the above field $'F(t,\r,\theta,\phi) = (\r+m) (m/\r)
\Phi(t,m^2/\r,\theta,\phi) = F(t,m^2/\r,\theta,\phi)$. The conformal isometry
is easily expressed in the alternative radial coordinate $s$: under $\r \to
m^2/\r$, $s \to -s$.  The fact that for any solution $F(t,s,\theta,\phi)$ of $
(-\p_s^2 + \left( V_1 - V_2 \; \Delta \right) F =0$, the function
$'F(t,s,\theta,\phi)= F(t,-s,\theta,\phi)$ is also a solution of this equation,
implies that $V_i(s)=V_i(-s), i=1,2$.\\

Note the consistency of the bound (\ref{boundphi}) with the conformal symmetry:
if $\Phi$ has compact support on a $t-$slice, then so does $'\Phi$ given in
(\ref{fip}), however the bound on $'\Phi$ gives no additional information, as
it follows from the bound on $\Phi$:
\begin{equation}
| '\Phi(t,\rho,\theta,\phi)| = \left| \frac{m}{\rho} \Phi(t,m
  2/\rho,\theta,\phi) \right| \leq \frac{m}{\rho} \; \frac{C}{m+(m^2/\rho)} =
\frac{C}{m+\rho}. 
\end{equation}

Finally, although we will not make use of this fact, we  note that the operator
\begin{equation} \label{boxi}
   \p_s^2  + V(s)  \Lsp,
\end{equation}
is the Laplacian on the cylinder ${\cal C} = \mathbb{S}^2 \times {\mathbb R}$ with
respect to the metric
\begin{equation}
  h=V^2 ds^2 + V \;(d\theta ^2 + \sin ^2 \theta d \varphi^2), 
\end{equation}
which reduces to the standard metric  on  ${\cal C}$ if we set $V(s)=1$.
The principal part of the operator $\A$ in (\ref{A}) is a particular case of
(\ref{boxi}).

\subsection{Estimates for functions defined on the cylinder}
\label{efc}

A $t-$slice $S$ of the extreme Reissner-Nordstr\"om spacetime is a cylinder
with the non standard metric induced from (\ref{metric1}). In this Section we
establish pointwise bounds on functions on $S$ from $L^2$ norms defined using
the standard metric on $\sph \times {\mathbb R}$, i.e., we use the hermitian
product
\begin{equation} 
\label{hp}
\langle f , g \rangle = \int f^* \; g \; dx \; \sin(\theta) \, d\theta \, d \varphi,
\end{equation}
($x = s/m$, where $s$ is defined in equation (\ref{s})) and the associated norm
\begin{equation}
||f || = \sqrt{ \langle f , f \rangle }.
\end{equation}
The following result from \cite{Dimock:1987hi} is used in \cite{Kay:1987ax}.
Since we will make use of it, we give a detailed proof using elementary
methods.  
\begin{lemma}
\label{l:boundfcylinder}
 Let $f$ be a complex function on the cylinder with finite norm, then
\begin{equation}\label{sb}
|f(s,\theta,\phi)| \leq M \; \left( ||f|| + m^2 \, || \partial_s^2 \, f || + ||
  \triangle \, f || \right), 
\end{equation}
where $M$ is the constant defined in (\ref{M}).
\end{lemma}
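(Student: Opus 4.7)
The inequality is a Sobolev-type embedding on the standard cylinder $\sph \times \mathbb{R}$, so my plan is to decompose $f$ simultaneously into spherical harmonics on $\sph$ and Fourier modes in $s$, and then apply Cauchy–Schwarz with a weight chosen so that both resulting factors are controlled.

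Step 1: Write
\begin{equation*}
f(s,\theta,\phi) = \sum_{\ell=0}^{\infty} \sum_{m'=-\ell}^{\ell} f_{\ell m'}(s)\, Y_{\ell m'}(\theta,\phi),
\qquad
f_{\ell m'}(s) = \frac{1}{\sqrt{2\pi}}\int_{\mathbb{R}} \hat{f}_{\ell m'}(k)\, e^{iks}\, dk,
\end{equation*}
so that $f$ is represented by $\{\hat{f}_{\ell m'}(k)\}$. Parseval's identity and the eigenvalue relation $-\triangle Y_{\ell m'} = \ell(\ell+1) Y_{\ell m'}$ give
\begin{equation*}
\|f\|^{2} = \sum_{\ell,m'}\int |\hat{f}_{\ell m'}(k)|^{2}\, dk,
\quad
\|\partial_{s}^{2} f\|^{2} = \sum_{\ell,m'}\int k^{4} |\hat{f}_{\ell m'}|^{2}\, dk,
\quad
\|\triangle f\|^{2} = \sum_{\ell,m'}\int \ell^{2}(\ell+1)^{2} |\hat{f}_{\ell m'}|^{2}\, dk.
\end{equation*}

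Step 2: Fix a point $(s,\theta,\phi)$. Pointwise inversion, the triangle inequality, and Cauchy–Schwarz with weight $w_{\ell,k} := 1 + m^{4} k^{4} + \ell^{2}(\ell+1)^{2}$ yield
\begin{equation*}
|f(s,\theta,\phi)|^{2}
\; \le \;
\left( \sum_{\ell,m'}\int \frac{|Y_{\ell m'}(\theta,\phi)|^{2}}{w_{\ell,k}}\, dk \right)
\left( \sum_{\ell,m'}\int w_{\ell,k}\, |\hat{f}_{\ell m'}(k)|^{2}\, dk \right).
\end{equation*}
By Step 1, the second factor equals exactly $\|f\|^{2} + m^{4}\|\partial_{s}^{2} f\|^{2} + \|\triangle f\|^{2}$.

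Step 3: I need to show the first factor is finite and independent of $(\theta,\phi)$. This is where the addition formula
\begin{equation*}
\sum_{m'=-\ell}^{\ell} |Y_{\ell m'}(\theta,\phi)|^{2} = \frac{2\ell+1}{4\pi}
\end{equation*}
enters: it eliminates the angular dependence. After the change of variable $k = u/m$, the $k$–integral becomes $m^{-1}\int du/(1 + u^{4} + \ell^{2}(\ell+1)^{2})$, which in turn scales like $(1+\ell^{2}(\ell+1)^{2})^{-3/4}$. Summing against the factor $(2\ell+1)/(4\pi)$ produces a convergent series, giving a finite constant (depending on $m$).

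Step 4: Combining Steps 2 and 3, $|f(s,\theta,\phi)|^{2} \le M^{2}\left(\|f\|^{2} + m^{4}\|\partial_{s}^{2} f\|^{2} + \|\triangle f\|^{2}\right)$ for a constant $M$, and the elementary bound $\sqrt{a^{2}+b^{2}+c^{2}} \le a+b+c$ yields (\ref{sb}). The main technical point—and the only step beyond routine—is the choice of weight $w_{\ell,k}$ in Step 2: it must be large enough that the sum/integral against $|Y_{\ell m'}|^{2}$ converges uniformly in $(\theta,\phi)$, yet small enough that the companion factor reproduces precisely the three norms appearing in (\ref{sb}). The factor of $m^{4}$ in the weight is exactly what is needed to match the factor $m^{2}$ in front of $\|\partial_{s}^{2} f\|$ after taking square roots.
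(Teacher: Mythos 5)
Your proposal is correct and follows essentially the same route as the paper's proof: spherical-harmonic plus Fourier decomposition, Cauchy--Schwarz against a polynomial weight in $(k,\ell)$, and the addition formula $\sum_{m'}|Y_{\ell m'}|^2=(2\ell+1)/(4\pi)$ to remove the angular dependence. The only substantive difference is your choice of weight $w_{\ell,k}=1+m^4k^4+\ell^2(\ell+1)^2$, which makes the second Cauchy--Schwarz factor reproduce the three norms exactly (the paper instead uses $(1+k^2+\ell(\ell+1))^2$ and then invokes $(a+b+c)^2\leq 3(a^2+b^2+c^2)$, which yields the closed-form constant in (\ref{M})); as a result your constant is a different, though still finite and explicit, number than the $M$ of (\ref{M}), and you should also fix the Fourier normalization to match the measure $dx=ds/m$ used in the definition (\ref{hp}) of the norm.
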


\begin{proof}
A function $f$ of finite norm can be expanded, using spherical harmonics in
  $\sph$ and Fourier transform in ${\mathbb R}$, as
\begin{equation} 
\label{rep}
f(x,\theta,\phi) = \frac{1}{\sqrt{2\pi}} \int dk \sum_{\ell \, m} \hat f_{\ell
  \, m}(k)\; e^{ikx}\; Y_{\ell \, m}(\theta,\phi). 
\end{equation}
where
\begin{equation} \label{rep2}
\hat f_{\ell \, m}(k) := \frac{1}{\sqrt{2\pi}} \int dk f(x,\theta,\phi)\;
e^{-ikx}\; Y^*_{\ell \, m}(\theta,\phi). 
\end{equation}
From equation (\ref{rep2}) we deduce
\begin{align}
 \nonumber
 |f(x,\theta,\phi)| & \leq  \frac{1}{\sqrt{2\pi}} \int dk \sum_{\ell \, m} |\hat f_{\ell
  \, m}(k)|\; |Y_{\ell \, m}(\theta,\phi)|\\
&=\frac{1}{\sqrt{2\pi}} \int dk \sum_{\ell \, m} 
\left[|\hat f_{\ell \, m}(k)|(1+k^2+\ell(\ell+1)) \right] \;
\left[\frac{|Y_{\ell \, m}(\theta,\phi)|}{(1+k^2+\ell(\ell+1)) }  \right] \label{eq:1b}
\end{align}
where in the last line we have just multiplied and divided 
$(1+k^2+\ell(\ell+1))$.
Using  the Cauchy-Schwarz inequality for series
 in (\ref{eq:1b}), then  the Cauchy-Schwarz inequality for integrals
yields
\begin{align}
|f(x,\theta,\phi)| &\leq  \frac{1}{\sqrt{2\pi}} \int dk \sqrt{\sum_{\ell \, m} 
|\hat f_{\ell \, m}(k)|^2(1+k^2+\ell(\ell+1))^2  } \;
 \sqrt{\sum_{\ell \, m} \frac{|Y_{\ell \,
         m}(\theta,\phi)|^2}{(1+k^2+\ell(\ell+1))^2 }   } \nonumber  \\
&\leq \;\sqrt{ \sum_{\ell \, m} \int \frac{| Y_{\ell \, m} (\theta,\phi) |^2}{
    (1+k^2+\ell (\ell+1))^2} \, \frac{dk}{2\pi}} 
\; \sqrt{ \sum_{\ell' \, m'} \int |\hat f_{\ell' \, m'}(k')|^2 \, (1+k'^2+\ell'
  (\ell'+1))^2 \; dk' } \label{bound1}.
\end{align}

The second factor in (\ref{bound1})  can be bounded using
\begin{align} \label{rep3}
\langle \partial_x^2 f , \partial_x^2 f \rangle &= \int \; dk \; \sum_{\ell \,
  m} \; |k^2 \;\hat f_{\ell \, m}(k)|^2,\\ 
\langle \triangle f , \triangle f \rangle &= \int \; dk \; \sum_{\ell \, m} \;
|\ell (\ell+1) \;\hat f_{\ell \, m}(k)|^2, 
\label{rep4}
\end{align}
together with
$(a+b+c)^2 \leq 3 (a^2+b^2+c^2)$ and $\sqrt{a^2+b^2+c^2} \leq |a|+|b|+|c|$,
\begin{multline}
\sqrt{ \sum_{\ell' \, m'} \int |\hat f_{\ell' \, m'}(k')|^2 \, (1+k'^2+\ell'
  (\ell'+1))^2 \;dk' } \\ 
\leq \sqrt{ \sum_{\ell' \, m'} \int 3|\hat f_{\ell' \, m'}(k')|^2 \,
  (1+k'^4+\ell'^2 (\ell'+1)^2)\; dk' } 
=
\sqrt{3} \sqrt{ ||f||^2 + || \partial_x^2 \, f ||^2 + || \triangle \, f ||^2 } \\
\leq \sqrt{3} \left( ||f|| + || \partial_x^2 \, f || + || \triangle \, f ||
\right)
\end{multline}
The identity $\sum_m | Y_{\ell \, m} (\theta,\phi) |^2 = (2 \ell +1 )/(4 \pi)$
in the first factor of (\ref{bound1}) then gives, after restoring units
($s=mx$),
\begin{equation}\label{sb1}
|f(s,\theta,\phi)| \leq M \; \left( ||f|| + m^2 \, || \partial_s^2 \, f || + ||
  \triangle \, f || \right), 
\end{equation}
where
\begin{align}
M^2 &= \frac{3}{8 \pi^2} \; \int \sum_{\ell} \frac{2 \ell + 1}{ (1+k^2+\ell
  (\ell+1))^2} \, dk \\ 
&= -\frac{3}{8 \pi^2} \; \frac{\partial}{\partial \ell} \sum_{\ell} \int
(1+k^2+\ell (\ell+1))^{-1} \, dk \\ 
&= \frac{3}{16 \; \pi} \; \sum_{\ell} \frac{2\ell+1}{(1+ \ell(\ell+1))^{3/2}} \label{M}
\end{align}

\end{proof}

Applying Lemma \ref{l:boundfcylinder} to
$F_t(s,\theta,\phi):=F(t,s,\theta,\phi)$ we get a bound for $|F|$ {\em on the
  t-slice}. However, if the terms $||F_t||, ||\Delta F_t||$ and
$|| \partial_s^2 \, F_t ||$ on the right hand side of (\ref{sb}) were further
bounded by {\em conserved} ($t$-independent) slice integrals, then we would get
a $t$-independent bound for $|F_t|$, i.e., a bound for $|F|$. This is our
motivation to study conserved energies in the following Section.

\subsection{Conserved energies}
\label{sec:conserved-energies}

In Section \ref{recast} we have  shown that  the problem (\ref{laplacian})-(\ref{id})
of propagation of scalar waves on the exterior extreme Reissner-Nordstr\"om 
spacetime ${\cal M}$ can be reformulated as the equation
\begin{equation} \label{eqF}
{\cal O} F := \ddot F +\;  \A \; F=0,
\end{equation}
with initial data $(f,g)$ of compact support on the $t=0$ slice
$S=\mathbb{S}^2 \times \mathbb{R}$, given by
\begin{equation}
  \label{idf}
  f=F(t=0)=\phi/(\r+m), \quad  g=\dot F(t=0)=\chi/(\r+m). 
\end{equation}
A solution of equation (\ref{eqF}) has a conserved (i.e., $t-$independent) energy
\begin{equation}
  \label{ceF}
  \E [F]= \int_{S_t} ( \dot F^2 + F\A (F) ) \; \dv,
\end{equation}
where the integral is performed on a $t-$slice  using the standard volume
element $dv= \sin(\theta) \;  d\theta\;  d \varphi \; ds$. This energy is useful
because  it provides a $t-$independent bound for $|| \p_s F ||$.\\

Taking derivatives with respect to $t$ to equation (\ref{eqF}) shows that 
$\dot F, \ddot F,...$ all satisfy the same equation, giving extra conserved quantities,
in particular
\begin{equation}
  \label{em1a}
  \E [\dot F]= \int_{\C} (\ddot F^2 + \dot F\A (\dot F) ) \; dv, 
\end{equation}
which, using  (\ref{eqF}) to substitute for $\ddot F$ reduces to 
\begin{equation}
  \label{ep1}
   \E [\dot F]= \int_{\C} [(\A F)^2 + \dot F\A (\dot F)]\;  \dv.
\end{equation}
The above conserved energy is useful because it provides a $t-$independent
bound for $|| \A F ||$, and thus for $|| \partial_s^2 F ||$. We would like to
get similar bounds for $|| (\Delta F) ||$ and $||F||$, and use them in
(\ref{sb}). Following \cite{Dafermos:2008en}, equation (\ref{ceF}) suggests
that, to obtain a bound for the integral of $F^2$, we should consider the
energy of a ``time integral'' $\tilde F$ of the solution $F$ in (\ref{eqF}), i.e.  a solution of the system
\begin{equation}
  \label{tF}
  \dot{\tilde F}=F  \; \text{ and } \; \ddot {\tilde F} +\;  \A \; \tilde F=0.  
\end{equation}
Assume there exists such a solution, its  conserved energy would be 
\begin{align}
   \E [\tilde F] &= \int_{\C} \left[(\dot {\tilde F})^2 + \tilde  F\A( \tilde
     F) \right] \dv,\\ 
 &=  \int_{\C} \left[ F^2 + \tilde  F\A( \tilde F) \right] \dv ,  \label{em1}
\end{align}
and would bound $||F||$. Using $[\Delta, {\cal O}]=0$ in (\ref{tF}) we could
prove that ${\cal O} \Delta \tilde F=0$, the conserved energy of this solution
being
\begin{align}
  \E [\Delta \tilde F] &= \int_{\C} \left (\Delta \dot{\tilde F}) ^2 + \Delta
    \tilde F\A(
    \Delta \tilde F) \right] \dv,\\
  &= \int_{\C} \left (\Delta F) ^2 + \Delta \tilde F\A( \Delta \tilde F)
  \right] \dv , \label{edelF}
\end{align}
which bounds $||\Delta F||$.\\
In order to proceed with this idea, we have to prove that a solution of (\ref{tF}) exists for  $F$ satisfying 
(\ref{eqF})-(\ref{idf}). This is done in the following Section.

\subsection{Integrating in time}
\label{sec:integrating-time}

In this section we will prove the existence of the ``time integral'' solution 
$\tilde F$ of the system (\ref{tF}) for a solution $F$ of (\ref{eqF})-(\ref{idf}).
The existence of $\tilde F$ implies the conservation of the  energies (\ref{em1}) and 
(\ref{edelF}) for $F$. The proof requires a notion of the inverse $\A ^{-1}$.
This is introduced by taking advantage of the fact that ${\cal A}$ is positive
definite  (since $V_1$ and $V_2$ are positive), and this allows us  
 to define an inner product on functions on the linear space $\hat {\cal H}$ 
of functions of compact support on $S$, 
\begin{equation}
  (p,q)=\int_{S} [  \partial_s p \partial_s q + V_2 (\partial_\theta
  p \partial_\theta q + (\sin\theta)^{-1}  \partial_\phi
  p \partial_\phi q ) +V_1 pq ]  \dv,
\end{equation}
which is formally obtained by integrating by parts $\int_{S} p \A q \; dv$.
We define a Hilbert space $\Hi$ as the completion of $\hat {\cal H}$
under this norm.
This is the space where $\A^{-1}$ is defined, as shown in the following.

\begin{lemma}
\label{l:inverseA}
Let $q$ be smooth and having  compact support $S_q \subset S$. Then, there exist a
unique solution $p \in \Hi$ of the equation
  \begin{equation}
    \label{invA}
    \A(p)=q.
  \end{equation}
Moreover, $p$ is smooth.
\end{lemma}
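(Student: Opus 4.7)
The plan is to reformulate $\A(p)=q$ as a variational problem on $\Hi$, apply the Riesz representation theorem to obtain a weak solution, and then upgrade that weak solution to a classical smooth one via interior elliptic regularity.

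First I would note that the inner product $(\cdot,\cdot)$ is tailored precisely so that a formal integration by parts yields
\begin{equation*}
(p,\varphi) \;=\; \int_{S} \varphi\,\A(p)\,dv
\end{equation*}
whenever $p,\varphi\in\hat{\Hi}$. Thus solving $\A(p)=q$ is equivalent to finding $p\in\Hi$ such that $(p,\varphi)=\int_S q\,\varphi\,dv$ for every $\varphi\in\hat{\Hi}$. The right-hand side defines a linear functional $L_q$ on $\hat{\Hi}$, and to invoke Riesz I need to bound it by $\|\varphi\|_{\Hi}$. Here I exploit that $q$ has compact support $S_q$, and that although $V_1\sim 2m|s|^{-3}$ decays at infinity, it is continuous and strictly positive on $S$, so it is bounded below by some $c_q>0$ on $S_q$. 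This gives
\begin{equation*}
\Bigl|\int_S q\,\varphi\,dv\Bigr|\;\leq\;\|q\|_{L^2(S_q)}\,\|\varphi\|_{L^2(S_q)}\;\leq\;c_q^{-1/2}\,\|q\|_{L^2(S_q)}\,\Bigl(\int_{S_q}V_1\,\varphi^2\,dv\Bigr)^{1/2}\;\leq\;C_q\,\|\varphi\|_{\Hi},
\end{equation*}
so $L_q$ extends by continuity to all of $\Hi$ and Riesz delivers a unique $p\in\Hi$ satisfying the variational identity. Uniqueness of $p$ in $\Hi$ is immediate, since any difference of two solutions is orthogonal to a dense subspace.

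The remaining task is smoothness. The operator $\A=-\p_s^2+V_1-V_2\Lsp$ has smooth coefficients, and its principal part has symbol controlled by $1$ and $V_2>0$, so $\A$ is strictly elliptic on every compact subset of $S$ (the degeneration of $V_2$ as $s\to\pm\infty$ is irrelevant for local regularity). Reading the variational identity as a distributional equation $\A(p)=q$ with $q\in C^\infty$, a standard interior elliptic bootstrap (difference quotients in $s$ and in sphere coordinates, iterated in $H^k_{\mathrm{loc}}$, followed by Sobolev embedding) gives $p\in C^\infty(S)$, and then $\A(p)=q$ holds pointwise.

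The step I expect to be the main obstacle is identifying elements of the abstract completion $\Hi$ with honest distributions on $S$, because the norm of $\Hi$ lacks a uniform $L^2$ term and controls the zeroth-order part of $p$ only through the $V_1$-weighted integral, which degenerates at the cylindrical and asymptotically flat ends. Locally, however, $V_1$ is bounded below by a positive constant, so the $\Hi$-norm locally dominates the $H^1$-norm of ordinary Sobolev theory; this lets me pass from the Cauchy-sequence construction of $\Hi$ to a genuine element of $H^1_{\mathrm{loc}}(S)$ on which elliptic regularity can be applied, completing the argument.
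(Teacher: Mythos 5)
Your proposal is correct and follows essentially the same route as the paper: the paper invokes the Lax--Milgram theorem with the bilinear form equal to the inner product itself (which in this symmetric, trivially coercive case is exactly the Riesz representation theorem you use), bounds the functional $z\mapsto\int_S qz\,dv$ by the same device of taking the positive minimum of $V_1$ on the compact support of $q$, and concludes smoothness by interior elliptic regularity. Your closing remark on identifying elements of the abstract completion $\Hi$ with elements of $H^1_{\mathrm{loc}}(S)$ is a point the paper passes over silently, but it does not alter the argument.
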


\begin{proof}
  The Lax-Milgram theorem (see for example \cite{Gilbarg}) asserts that 
if $B(\cdot, \cdot)$ is a bilinear form on $\Hi$ for which  there exists $\alpha, \beta>0$ 
such that 
\begin{equation} \label{lm1}
|B(u,v)| \leq \alpha \sqrt{(u,u)} \;  \sqrt{(v,v)}, \;\; (u,u) \leq \beta \; B(u,u),
\end{equation}
hold for all $u,v \in \Hi$ then, for any bounded linear operator $L: \Hi \to
{\mathbb R}$, the equation on $p$
\begin{equation} \label{lm2}
B(z,p)=L(z) \;\; \text{ for every }  \; z \in \Hi, 
\end{equation}
has a unique solution.\\
Conditions (\ref{lm1}) are trivially satisfied for $\alpha=\beta=1$ if
$B(p,q)=(p,q)$ (Schwarz's inequality).

The operator $L(z):=\int_{S} q z \; dv$ can be shown to be bounded by using the
fact that the restriction of $V_1$ to $S_q$ has a minimum $V_1^{(q)} >0$ and
thus, applying Schwarz's inequality to $L^2[{\cal S}_q,dv]$ gives
\begin{multline}
| L(z) | = \left| \int_{{ S}_q } q z \; dv \right| \leq \sqrt{ \int_{{S}_q} q^2 \; dv} \; \sqrt{ \int_{{S}_q} z^2 \; dv} \\ <
\sqrt{ \int_{{S}_q} q^2 \;dv} \; \sqrt{ \int_{{S}_q} \left( \frac{V_1}{V_1^{(q)}} \right) z^2 \; dv} 
<   \sqrt{ \int_{{S}_q}  \left( \frac{q^2}{V_1^{(q)}} \right) \;  dv}\;
\sqrt{(z,z)} 
\end{multline}
It follows from the Lax-Milgram theorem that there exists a $p \in \Hi$ such that 
\begin{equation}
\int_{S} z \A p \; dv = \int_{S} z q \; dv, \; \; \text{ for every }  \; z \in \Hi,
\end{equation}
and hence $p$ is a weak solution of the elliptic differential equation $\A p =
q$.  Smoothness of $p$ follows from interior elliptic regularity arguments (see
\cite{Gilbarg}) and thus $p$ satisfies $\A p=q$. 

\end{proof}

An alternative proof using expansions in spherical harmonics $Y_{\ell
  m}(\theta,\phi)$ sheds light on the behaviour of $p$ at infinity and near the
horizon.  Introducing $P=p/N$, equation (\ref{invA}) reads
\begin{equation} \label{ode}
\A(p)=\A(NP) = N^{-3} \left( \p^2_{\rho} P + r^{-2} \triangle P \right) = q.
\end{equation}
If we expand $P = \sum_{\ell m} P_{\ell m}(\rho) Y_{\ell m}(\theta,\phi)$ (and similarly expand $q$) 
the above equation reduces to an ODE for each  mode that can  solved  explicitly:
\begin{equation} \label{solode}
(2\ell +1)  \, P_{\ell m}(\rho) = \rho^{\ell+1}  \int^{\rho}_{ C_{\ell m}} x^{-\ell} N^3(x) q_{\ell m}(x) \, dx
-  \rho^{-\ell}  \int^{\rho}_{ D_{\ell m}} x^{\ell+1} N^3(x) q_{\ell m}(x) \, dx.
\end{equation}
For every harmonic mode, $ C_{\ell m}$ and $D_{\ell m}$ are constants of
integration of the generic solution above, however if we use the fact that the
$q_{\ell m}$ have compact support, we conclude that the choice $ C_{\ell m} =
\infty, D_{\ell m}=0$ is the only one that gives an appropriate asymptotic
behavior near the horizon and spatial infinity, such that $p$ belongs to
$\Hi$. This gives (after multiplication times $N$) the unique $p$ singled out
by the Lax-Milgram theorem. Its projection
\begin{equation} \label{pl}
 p_{\ell} := \sum_{m=-\ell}^{\ell} p_{\ell m}(\rho) Y_{\ell m}(\theta,\phi),
\end{equation} 
onto the $\ell$ subspace behaves as 
\begin{equation} \label{beh}
 p_{\ell} \sim \begin{cases} \rho^{\ell} & \text{ as }\rho \to  0^+, \\
 \rho^{-\ell} & \text{ as }\rho \to \infty.
\end{cases}
\end{equation}
Thus, generically, $p$ approaches a constant in both the $\rho \to \infty$ and the $\rho \to 0^+$ limits.

Lemma \ref{l:inverseA} refer to the space variables on the cylinder $S$,
however the functions involved in the proof of existence of equations
(\ref{tF}) depend also on the time parameter $t$. The following remarks, which
concern the $t$ dependence of the functions, are useful in the proofs. Let
$q(t,s,\theta,\phi)$ be a smooth function on $\mathbb{R}\times S$ which has
compact support on $S$ for every $t$.  Note that this is our appropriate class
of functions since they arise as solutions of the wave equation with compact
support, smooth, initial data.  Let $p$ be the solution of
\begin{equation}
  \label{eq:1}
\A(p)=q.  
\end{equation}
From Lemma \ref{l:inverseA} we deduce that $p$ is smooth on $S$. To obtain
smoothness with respect to $t$ we take $t$ derivatives to equation
(\ref{eq:1}). The function $\dot p $, if it it exists, should satisfy the
equation
\begin{equation}
  \label{eq:2}
 \A(\dot p )= \dot q.
\end{equation}
However, by hypothesis $\dot q$ is smooth and has compact support on $S$, hence
we can use Lemma \ref{l:inverseA} to prove that $\dot p$ exists and it is
smooth on $S$. Taking an arbitrary number of $t$ derivatives we conclude that
$p(t,s,\theta,\phi)$ is smooth on $\mathbb{R}\times S$.

Partial derivatives with respect to $t$ clearly conmute with $\A$, to prove that
they also conmute with $\A^{-1}$ for this class of functions we 
 write equation  (\ref{eq:1}) and  (\ref{eq:2})  as
\begin{align}
  \label{eq:3}
 p &= \A^{-1} (q),\\
  \dot p &= \A^{-1} (\dot q). \label{eq:3b}
\end{align}
Taking a time derivative of (\ref{eq:3}) and using equation (\ref{eq:3b}) we obtain
\begin{equation}
  \label{eq:4}
 \frac{\partial }{\partial t} \A^{-1} (q)=\A^{-1} (\dot q).
\end{equation}

We have all the ingredients to prove that there is a solution to equations
(\ref{tF}). We emphasize that in the following proof we do not make use of the
decay behaviour (\ref{beh}), we only need the statement of Lemma
\ref{l:inverseA}.

\begin{lemma}
  \label{l:int-time}
For a given solution $F$ of equation  (\ref{eqF}) with initial data
(\ref{idf}) of compact support,  there exist a  solution $\tilde F$
of equations (\ref{tF}), and the energies (\ref{em1}) and (\ref{edelF}) are
finite and conserved.     
\end{lemma}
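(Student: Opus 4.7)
The plan is to give an explicit formula for $\tilde F$ by inverting $\A$ on a time derivative of $F$, rather than trying to construct $\tilde F$ by integration in $t$. Specifically, I would define
\begin{equation*}
\tilde F := -\A^{-1}(\dot F).
\end{equation*}
By finite speed of propagation for the hyperbolic equation (\ref{eqF}) applied to the compactly supported initial data (\ref{idf}), $F(t,\cdot)$ has compact support on the $t$-slice $S$ for every $t$, and the same is true for $\dot F$, $\ddot F$, and all higher time derivatives. Hence Lemma \ref{l:inverseA} applies to $\dot F$, and $\A^{-1}(\dot F)$ is well-defined, smooth on $\mathbb{R}\times S$, and lies in $\Hi$ for every $t$; moreover, by the commutativity identity (\ref{eq:4}) derived just before the lemma, partial derivatives in $t$ pass through $\A^{-1}$ freely.

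Next I would verify the two conditions in (\ref{tF}). Using (\ref{eq:4}) and $\ddot F = -\A F$,
\begin{equation*}
\dot{\tilde F} \; = \; -\A^{-1}(\ddot F) \; = \; \A^{-1}(\A F) \; = \; F,
\end{equation*}
where the last equality uses that $F(t,\cdot) \in \Hi$ (it is compactly supported) together with the uniqueness statement in Lemma \ref{l:inverseA} applied to $\A p = \A F$. For the wave equation,
\begin{equation*}
\ddot{\tilde F} + \A\tilde F \; = \; \partial_t(\dot{\tilde F}) + \A(-\A^{-1}(\dot F)) \; = \; \dot F - \dot F \; = \; 0.
\end{equation*}

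To establish finiteness and conservation of the two energies, I would use that $\tilde F(t,\cdot)\in\Hi$ for every $t$ and that $\A\tilde F = -\dot F$ is compactly supported. Integration by parts gives
\begin{equation*}
\int_{S_t} \tilde F\,\A\tilde F\,dv \; = \; (\tilde F,\tilde F) \; = \; \|\tilde F\|_{\Hi}^{\,2},
\end{equation*}
which is finite; the term $\int (\dot{\tilde F})^2\,dv = \int F^2\,dv$ is trivially finite since $F$ has compact support. Conservation of $\E[\tilde F]$ then follows from the standard multiplier computation: differentiating in $t$ and using self-adjointness of $\A$ on $\Hi$ together with the wave equation for $\tilde F$ cancels all terms. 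For $\E[\Delta\tilde F]$, since $[\Delta,\A]=0$ on our class, the same relation gives $[\Delta,\A^{-1}]=0$ on compactly supported inputs, so $\Delta\tilde F = -\A^{-1}(\Delta\dot F)$; because $\Delta F$ also solves (\ref{eqF}) with compactly supported data, the argument carries over verbatim.

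The main obstacle is the justification of integration by parts on the non-compact cylinder, i.e., showing that boundary contributions at the asymptotic ends $s\to\pm\infty$ vanish. Because $\tilde F\in\Hi$ is, by construction, a limit in the energy norm of compactly supported smooth functions and $\A\tilde F=-\dot F$ has compact support, the pairings $\int z\,\A\tilde F\,dv$ converge uniformly along a suitable approximating sequence, allowing the boundary terms to be controlled. A subsidiary technical point is commuting $\partial_t$ with the slice integrals in the conservation computation, which is routine given the joint smoothness in $(t,s,\theta,\phi)$ of $\tilde F$ provided by Lemma \ref{l:inverseA} and (\ref{eq:4}).
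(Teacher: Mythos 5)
Your proof is correct and rests on the same machinery as the paper's: Lemma \ref{l:inverseA}, the commutation identity (\ref{eq:4}), and the compact support of $\A \tilde F = -\dot F$ to obtain a finite, conserved energy. The only difference is organizational: the paper first constructs the \emph{second} time integral $\tilde{\tilde F} = -\A^{-1}F$ and sets $\tilde F = \dot{\tilde{\tilde F}}$, whereas you invert $\A$ on $-\dot F$ directly; by (\ref{eq:4}) these definitions give the same function, and the paper itself records your route as an alternative immediately after its proof (equation (\ref{eq:5}), where the solution $G$ with data (\ref{dataFt}) is precisely $-\dot F$). The paper's detour through $\tilde{\tilde F}$ buys only the side observation that $\E[\tilde{\tilde F}]$ diverges, which is used later in the discussion of transverse derivatives. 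Two small simplifications to your write-up: (i) the finiteness of $\int_{S}\tilde F\,\A\tilde F\,dv$ follows already from the fact that the integrand has compact support (since $\A\tilde F=-\dot F$ does), so the identification with $(\tilde F,\tilde F)$ and the attendant worry about boundary terms at $s\to\pm\infty$ are unnecessary; in any case that identity is exactly the weak formulation delivered by Lax--Milgram with test function $z=\tilde F\in\Hi$, so nothing further needs to be checked. (ii) The integrations by parts in the conservation computation are unproblematic for the same reason: in each pairing one factor ($F$, $\dot F$, or $\A\tilde F$) is compactly supported, which is the justification the paper gives.
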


\begin{proof}
Consider the function
\begin{equation} \label{ttF}
\tilde{\tilde F} = - \A^{-1} F.
\end{equation}
The function $\tilde{\tilde F}$ exists and it is smooth by Lemma
\ref{l:inverseA}, since $F$ and all its time derivatives have compact support in $S$  for all $t$. 
Note that
\begin{equation}
\ddot{\tilde{\tilde F}} = -\A^{-1} \ddot F = \A^{-1} \A F = F = -\A \tilde{\tilde F}.
\end{equation}
This equation  shows that $\tilde{\tilde F}$ is a solution of the wave equation and a second time integral 
of $F$, i.e.,  $\ddot{\tilde{\tilde F}}=F$.
This immediately implies that 
\begin{equation}
\tilde F = \dot{\tilde{\tilde F}},
\end{equation}
is also a solution of the wave equation, and a first  time integral of $F$, i.e., $\dot{\tilde F} =F$.
This first time integral has finite energy, since 
\begin{equation}
   \E [\tilde F] = \int_{S} \left[ (\dot {\tilde F})^2 + \tilde  F\A( \tilde F) \right] \dv
=  \int_{S} \left[ F^2 + \tilde  F\A( \tilde F) \right]\dv \label{em1b},
\end{equation}
and 
\begin{equation}
\A  \tilde F =    \frac{\partial }{\partial t} \A  \tilde{\tilde F} = -\dot F,
\end{equation}
so both terms in the integrand in (\ref{em1b}) have compact support.
Note, however, that 
\begin{equation}
\label{eq:Etildetilde}
   \E \left[\tilde{\tilde F}\right] = \int_{S} \left[ \left(\dot{\tilde{\tilde
           F}}\right)^2 + \tilde{\tilde{F}}\A \tilde{\tilde{F}} \right] \dv
   =\int_{S} \left[ ( {\tilde F})^2 - \tilde{\tilde  F} F  \right] \dv, 
\end{equation}
diverges as a consequence of the behaviour (\ref{beh}) of $\tilde F = -\A^{-1}
\dot F$ in the first term in the integrand (equation (\ref{beh}) applies to
this case since $\dot F$ has compact support
on $t-$slices.)

The conservation of $ \E [\tilde F] $ in (\ref{em1b}) follows from
\begin{equation}
\frac{d}{dt}  \E [\tilde F] =  \int_{S} \left[ 2 F \dot F  +  F\A  \tilde F + \tilde F \A F  \right]\dv = 
 \int_{S} \left[ 2 F (\ddot{\tilde  F}  +  \A  \tilde F ) \right]\dv = 0.
\end{equation}
All the integrations by parts above are possible since always one of the factors have compact support. 
Note also that 
\begin{equation}
  \label{eq:5}
  \tilde F =\A^{-1} G,
\end{equation}
 where $G$ is the solution of the wave equation (\ref{eqF}) with initial   data with compact 
\begin{equation} \label{dataFt}
G(t=0) = -g, \quad \dot G (t=0) = \A(f).
\end{equation}
  The finiteness and conservation of $\E [\Delta
\tilde F]$ follows from similar arguments using $[\Delta,\A]=0$. 

\end{proof}

It is also possible to prove the existence of $\tilde F$ directly from equation
(\ref{eq:5}) without constructing the second time integral $\tilde{\tilde
  F}$. Namely, take the solution $G$ of the wave equation with data
(\ref{dataFt}). The function $G$ has compact support on $S$ for every $t$,
hence there exist $\tilde F$ such that (\ref{eq:5}) holds. We have chosen to
construct first $\tilde{\tilde F}$ because this function could be useful in
future aplications.  The existence of $\tilde F$ can also be proved 
by acting with $\A$
only on initial data (instead of on a function that depends on $t$ as in
(\ref{eq:5})). That is, consider the solution $\tilde F$ of the wave equation
(\ref{eqF}) with the following initial (see equation (\ref{dataFt})) data
\begin{equation}
  \label{eq:6}
  \tilde F(t=0)=-\A^{-1}(g), \quad \dot{\tilde F}(t=0)=f.
\end{equation}
The initial data have not compact support but the solution of the wave
equation, by finite speed propagation, nevertheless exists and it is smooth.

\subsection{Proof of theorem \ref{t:1}: bound on $\Phi$}

Equation (\ref{laplacian}) subject to (\ref{id}) is equivalent to (\ref{eqF})
subject to (\ref{idf}).  Let $F_t(s,\theta,\phi)=F(t,s,\theta,\phi)$ be the
restriction of $F$ to a $t$ slice. Since the slice is $S^2 \times {\mathbb R}$
and $F_t$ has compact support, the bound (\ref{sb}) holds, then
\begin{equation}
|F_t(s,\theta,\phi)| \leq M \;  \left( ||F_t|| + m^2\; || \partial_s^2 \, F_t || + || \Delta \, F_t || \right).
\end{equation}
On the other hand, from (\ref{A}), 
\begin{equation}
|| \p_s^2 F_t || \leq || \A F_t || + V_1^{max} ||F_t|| + V_2^{max} || \Delta F_t||
\end{equation}
where $V_{1,2}^{max}$ are the maximum of the positive potential $V_{1,2}$, given in (\ref{Vbounds}).
Thus 
\begin{align} \nonumber
|F_t(s,\theta,\phi)| & \leq M \;  \left( m^2 \, || \A F_t ||+ (1+m^2 \, V_1^{max}) ||F_t|| +
 (1+ m^2\, V_2^{max}) || \Delta F_t|| \right) \\
& \leq  \tfrac{17 }{16} M \;  \left( m^2 \, || \A F_t ||+  ||F_t|| + || \Delta F_t|| \right)
\end{align}
However
\begin{equation}
||F_t||  \leq \sqrt{\E [\tilde F]}, \;\;\;  || \A  F_t || \leq \sqrt{\E [\dot F]},\;\;\; 
|| \Delta \, F_t || \leq \sqrt{\E [\Delta \tilde F]}, 
\end{equation}
and the above energies are $t-$independent, thus
\begin{equation}
\label{eq:bound}
|F(t,s,\theta,\phi)| \leq  \tfrac{17 }{16} M\;
\left( m^2 \sqrt{\E [\dot F]} + \sqrt{\E [\Delta \tilde F]} + \sqrt{\E [\tilde F]} \right) =: C,
\end{equation} 
where $C$ is a constant, and  (\ref{boundphi}) follows.

It is important  to note that if we attempt to replace $F$ by $\tilde F$ in the
bound \eqref{eq:bound} (that is, if we want to prove that $\tilde F$ is
bounded) then the last term in the right hand side of \eqref{eq:bound} is
$\sqrt{\E [\tilde{\tilde F}]}$, and we have seen that this energy is not bounded
  (see equation \eqref{eq:Etildetilde}).

\subsection{Proof of Theorem \ref{t:1}: bounds on higher derivatives} 
\label{hd}

The metric (\ref{metric1}) admits a  four dimensional space of  Killing vector fields, the span of 
\begin{align} 
K_1 & = \cos (\phi) \; \p_{\theta} - \cot(\theta) \; \sin(\phi) \p_{\phi} \\
K_2 &=  \sin (\phi) \; \p_{\theta} - \cot(\theta) \; \cos(\phi) \p_{\phi} \\ \label{killings}
K_3 &= \p_{\phi} \\ 
K_4 &= \p_t
\end{align}
Since the wave operator $\Box$ commutes with Lie derivatives along Killing
vector fields, given any solution $\Phi$ of equation (\ref{laplacian}),
$\pounds_{K_{i_1}} \cdots \pounds_{K_{i_j}}\Phi$ will also be a solution of this
equation. Applying (\ref{boundphi}) to this solution we obtain the bound
\begin{equation} \label{bpd1}
| \pounds_{K_{i_1}}.... \pounds_{K_{i_j}}\Phi | \leq \frac{C_{i_1 i_2...i_j}}{\rho+m},
\end{equation}
where $C_{i_1 i_2...i_j}$ is a constant.\\
At every point $p$ of the spacetime ${\cal M}$, the vectors (\ref{killings})
span a 3-dimensional subspace of the tangent space $T_p {\cal M}$, equation
(\ref{bpd1}) fails to give a bound for derivatives of $\Phi$ along radial
directions.  To obtain a pointwise bound for $\p_s F$, $F$ a solution of
(\ref{eqF}), we proceed as follows: equation (\ref{sb}) applied to $\p_s F$
gives
\begin{equation}\label{bFs}
  |\p_s F| \leq M \;  \left( ||\p_s F|| + m^2 \, || \partial_s^3 \, F || + || \triangle \, \p_s F || \right).
\end{equation}
The first term on the right hand side above has a $t-$independent bound given by the $\E [F]$, since 
\begin{equation}
   \E [F]\geq \int_{S} ( F \A F) \; dv \geq \int_{S} |\partial_s F|^2  dv.
\end{equation}
The last term is similarly bounded by the energy $\E [\triangle F]$ (note that
$[{\cal O},\triangle]=0$, then $\triangle F$ is a solution of the field
equation if $F$ is a solution). To bound the second term, we use the fact that
$[{\cal O},\A]=0$, and compute the energy of $\A F$:
\begin{equation}
   \E [\A F] \geq  \int_{S}  (\A F)\A (\A F)  dv \geq \int_{S} |\partial_s\A F|^2.
\end{equation}
The last integrand is (see (\ref{A}))
\begin{equation} \label{sder}
  \partial_s\A F=-\partial^3_s F+F \partial_s V_1 +V_1\partial_s F - (\partial_s 
  V_2) \triangle F-V_2 \triangle \partial_s  F
\end{equation}
and it is easy to check that the functions $\partial_s V_i$ are bounded, 
say $|\partial_s V_i| \leq V_{i,s}^{max}$. Thus
\begin{multline} \nonumber
   ||\partial^3_s F|| \leq || \partial_s\A F || + V_{1,s}^{max} ||F|| + V_{1}^{max} ||\p_s F||
+  V_{2,s}^{max} ||\triangle F|| +  V_{2}^{max}  ||\p_s \triangle F || \\
\leq \E [\ A F] +  V_{1,s}^{max} \E [\tilde F] +  V_{1}^{max} \E [F] +
 V_{2,s}^{max} \E [\triangle \tilde F] +  V_{2}^{max} \E [ \triangle F ]
\end{multline}
which is $t-$independent. We conclude that the right hand side of equation (\ref{bFs}) can be bounded
by a $t-$independent constant.

It is easier to find a pointwise bound for the second radial derivative, since  
using  (\ref{A}),
\begin{equation}
  |\partial^2_sF|\leq |\A F| + V_1^{max} |F| +   V_2^{max} |\triangle F|.
\end{equation}
and both $\A F$ and $\triangle F$ are solutions of (\ref{eqF}), and therefore, pointwise bounded.

To bound $|\p_s ^3 F|$ we may use (\ref{sder})
\begin{equation}
  |\partial^3_s F| \leq | \partial_s\A F | + V_{1,s}^{max} |F| + V_{1}^{max} |\p_s F|
+  V_{2,s}^{max} |\triangle F| +  V_{2}^{max}  |\p_s \triangle F |, 
\end{equation}
together with the fact that every function on the right hand side is either a
solution of (\ref{eqF}) with compact support on $t-$slices, or an
$s-$derivative of such a solution, all of which are bounded.  Higher
$s-$derivatives can be bounded in this way by induction: take $(n-3)$
$s-$derivatives of equation (\ref{sder}), this gives $\p_s^n F$ in terms of
lower $s-$derivatives of $F, \A F$ and $\triangle F$, all of which, being
solutions of (\ref{eqF}) with compact support on $t-$slices, are pointwise
bounded by the inductive hypothesis. These terms come multiplied by higher
$s-$derivatives of the $V_i$, but these can be easily shown to be bounded by
noting that
$\p_s^k V_i$ is a polynomial in $z :=1/(r+m)$.\\

\subsubsection{Transverse derivatives at the horizon}
\label{sec:transv-deriv-at}

In \cite{Aretakis:2011hc}, some transverse derivatives of $\Phi$ across the
horizon were found to diverge along the horizon generators.  In this Section we
show that compact data fields, which belong to a subclass of the solutions
studied in \cite{Aretakis:2011hc}, are better behaved.

Since the coordinates $\{ t, \rho,\theta,\phi\}$ (or $\{ t, s,\theta,\phi\}$)
cover only the exterior region of the black hole, we need to switch to advanced
coordinates $\{ v=t+s,r=\rho+m,\theta,\phi \}$ in order to properly state this
problem.  Note that
\begin{align}
\left. \frac{\p }{\p s}\right|_{\{ t,\theta,\phi \} } &= \left. \frac{\p}{\p
    v}\right|_{\{ r,\theta,\phi \} } + \left( \frac{r-m}{r} \right)^2  
 \left. \frac{\p}{\p r}\right|_{\{ v,\theta,\phi \} },\\
\left. \frac{\p }{\p t}\right|_{\{ s,\theta,\phi \} } &=  \left. \frac{\p}{\p
    v}\right|_{\{ r,\theta,\phi \} }, 
\label{dv}
\end{align}
become linearly dependent at the horizon and that the norm of $\left. \frac{\p
  }{\p s}\right|_{\{ t,\theta,\phi \} }$ vanishes when $r \to m^+$. Thus,
although we have proved the pointwise boundedness of partial derivatives along
the coordinates $\{t,s,\theta,\phi \}$, which are suitable and span the tangent
space at any point outside the horizon, the study of the transverse derivatives
$ \left. \frac{\p}{\p r}\right)_{\{ v,\theta,\phi \} }$ {\em at the horizon}
requires a separate treatment.

In advanced coordinates, (\ref{metric1}) reads
\begin{equation} \label{rnvr} ds^2 = -(1-m/r)^2 \; dv^2 + 2 dv dr + r^2 \; (d
  \theta^2 + \sin^2 \theta \; d \varphi^2),
\end{equation}
and the scalar wave equation is 
\begin{equation} \label{boxvr}
\Box \Phi =  \left( \frac{r-m}{r} \right)^2  \p_r^2 \Phi + 2 \left(
  \frac{r-m}{r^2} \right) \p_r \Phi +2 \p_r \p_v \Phi +\frac{2}{r} \p_v \Phi +
\frac{\triangle}{r^2} \Phi =0.
\end{equation}
 
Theorem 1 in \cite{Aretakis:2011hc} states that for every $\ell$ there exists a
set of constants $\beta_i$ such that the functions $H_{\ell} [\Phi]$, defined
on the horizon $r=m$ as
\begin{equation}
H_{\ell} [\Phi] := \left[\p_r^{\ell +1} \Phi_{\ell}+ \sum_{i=0}^{\ell} \beta_i
  \p_r^{i} \Phi_{\ell} \right]_{r=m}, 
\end{equation}
are constant along the horizon generators, i.e., they depend on $(\theta,\phi)$
but not on $v$.  Here, $ \Phi_{\ell}$ is the projection of $\Phi$ onto the
$2\ell+1$ dimensional $\ell$ harmonic space on $\sph$ (as in equation
(\ref{pl})), which is itself a solution of the wave equation.  This theorem
implies that a generic solution $\Phi$ of the wave equation within the class
studied in \cite{Aretakis:2011hc} does not admit a time integral solution
$\tilde \Phi$ in the sense of (\ref{tF}), as the existence of such a time
integral would imply that $H_{\ell} [\Phi] = \p_v H_{\ell} [\tilde \Phi] \equiv
0$, whereas the $H_{\ell} [\Phi] $ are generically non-trivial for these
fields, as they evolve from data give on a surface crossing the horizon, and
the data are generically  non-trivial at the horizon. The reason why this
result does not contradict Lemma \ref{l:int-time} above lies in the fact that
the solutions of compact data studied here are a subclass of the set studied in
\cite{Aretakis:2011hc}, and the $H_{\ell} [\Phi]$ trivially vanish for this
subclass, as can be seen by taking the limit $v \to -\infty$ along horizon
generators (where eventually $\Phi$ is trivial), and using
the fact that $H_{\ell} [\Phi]$ does not depend on $v$.

Theorem 6 in \cite{Aretakis:2011hc} states that some transverse derivatives at
the horizon blow up along generators, more precisely,
\begin{equation} \label{div}
\partial_r^{\ell+m+k} \p_v^m \Phi_{\ell} \sim v^{k-1}, 
  \text{ as } v \to \infty, \; k \geq 2, m\geq 0 ,
\end{equation}
where the limit is taken along a horizon generator, i.e., $v \to \infty$ while
keeping $r=m$, and $(\theta,\phi)$ fixed.  The proof of this theorem, however,
requires that the $H_{\ell} [\Phi]$ be non trivial, and so this result does not
hold for compact data solutions.

The worst divergences of transverse derivatives along the horizon reported in
\cite{Aretakis:2011hc} come from the $\ell=0$ piece of $\Phi$. Let us then
assume, for simplicity, that $\Phi=\Phi_{\ell=0}$ is a spherically symmetric
solution of (\ref{boxvr}), then the last term in (\ref{boxvr}) vanishes and
\begin{equation}
\p_v (\p_r \Phi + \tfrac{1}{m} \Phi) \= 0,
\end{equation}
 where $ \dot{=}$ means ``equal at the horizon''. 
Integrating this equation, we prove the constancy along the horizon  generators of 
\begin{equation} \label{h0}
H_{\ell=0} [\Phi]=\p_r \Phi + \tfrac{1}{m} \Phi,
\end{equation}
 which is one of the $H_{\ell}$ referred to above. Note that 
this conserved quantity implies the
 boundedness of $\p_r \Phi$ at the horizon. 
If we  take the  $r$-derivative of equation (\ref{boxvr}),
evaluate this equation  at the horizon, and then  use the original
equation  (\ref{boxvr}) (evaluated at the horizon) 
to eliminate the term $\p_v \p_r \Phi$ we obtain 
\begin{align}
\p_v \p_r^2 \Phi &= \tfrac{2}{m^2} \p_v \Phi - \tfrac{1}{m^2} \p_r \Phi,\\
&= \tfrac{2}{m^2} \p_v \Phi - \tfrac{1}{m^2} H_{\ell=0}+\tfrac{1}{m^3}\Phi,
\end{align}
where in the last equality we have used equation \eqref{h0}. 
Integrating this equation in $v$, and using (\ref{h0}) gives
\begin{equation} \label{pr2h}
\p_r^2 \Phi = \p_r^2 \Phi |_{v_0}  + 
\frac{2}{m^2} \left( \Phi - \Phi|_{v_0}  \right) - \frac{H_0}{m^2} (v-v_0) 
+ \frac{1}{m^3} \int_{v_0}^v \Phi \; dv.
\end{equation}
Since $|\Phi| \leq C  v^{-3/5}$ for large $v$ along a horizon generator and some constant $C$ \cite{Aretakis:2011hc},
it follows from \eqref{pr2h} that, if $H_0 \neq 0$,  $\p_r^2 \Phi \sim v$ along generators.
However, for fields with compact data, $H_0=0$ in \eqref{pr2h} implies that
$|\p_r^2 \Phi| \leq C'  v^{2/5}$ in this same limit, $C'$ a constant.
 This is a distinctive feature of compact data
fields.\\

Now suppose that $\tilde \Phi$ belongs to the class studied by Aretakis.  
 Then  we could rewrite \eqref{pr2h} as
\begin{equation} \label{pr2ha}
\p_r^2 \Phi = \p_r^2 \Phi |_{v_0}  + 
\frac{2}{m^2} \left( \Phi - \Phi|_{v_0}  \right)
+ \frac{1}{m^3} \left( \tilde \Phi - \tilde \Phi|_{v_0}  \right), 
\end{equation}
and use boundedness of $| \tilde \Phi|$ to prove boundedness of $\p_r^2 \Phi$
at the horizon. More generally, we could apply (\ref{div}) to $\tilde \Phi$ and
arrive at
\begin{equation} \label{div2}
  \partial_r^{\ell+n+q} \p_v^n \Phi_{\ell} \sim v^{q-3}, 
  \text{ as } v \to \infty, \; (q \geq 4, n \geq 0) .
\end{equation}

We do not have a proof that $\tilde \Phi$ could be extended to a field in the
class of solutions in \cite{Aretakis:2011hc}, as, in principle, $\tilde \Phi$
is only defined in the open set $r>m$. However, the facts that $\tilde \Phi
\sim r^{-1}$ near spacelike infinity (see (\ref{beh})) and $E[\tilde \Phi]<
\infty$ suggest that such an extension exits.  \footnote{We thank S. Aretakis
  and H. Reall for this observation.} Note that it is unlikely that we could
further extend these arguments to $\tilde{\tilde {\Phi}}$, since this field has
divergent energy.

 \section*{Acknowledgments}
 We would like to thank Lars Andersson, Pieter Blue, Mihalis Dafermos and
 Martin Reiris for illuminating discussions. We specially thank Stefanos
 Aretakis and Harvey Reall for pointing out errors in previous versions of this
 manuscript, and making comments that led to many improvements.

 The authors are supported by CONICET (Argentina). This work was supported by
 grants PIP 112-200801-02479 and PIP 112-200801-00754 of CONICET (Argentina),
 Secyt 05/B384 and 30720110101131 from Universidad Nacional de C\'ordoba
 (Argentina), and a Partner Group grant of the Max Planck Institute for
 Gravitational Physics (Germany).


\begin{thebibliography}{10}

\bibitem{Aretakis:2010gd}
S.~Aretakis.
\newblock {The Wave Equation on Extreme Reissner-Nordstr\"om Black Hole
  Spacetimes: Stability and Instability Results}, 2010, 1006.0283.

\bibitem{Aretakis:2011ha}
S.~Aretakis.
\newblock {Stability and Instability of Extreme Reissner-Nordstr\"om Black Hole
  Spacetimes for Linear Scalar Perturbations I}.
\newblock {\em Commun.Math.Phys.}, 307:17--63, 2011, 1110.2007.

\bibitem{Aretakis:2011hc}
S.~Aretakis.
\newblock {Stability and Instability of Extreme Reissner-Nordstr\"om Black Hole
  Spacetimes for Linear Scalar Perturbations II}.
\newblock {\em Annales Henri Poincare}, 12:1491--1538, 2011, 1110.2009.

\bibitem{Aretakis:2012ei}
S.~Aretakis.
\newblock {Horizon Instability of Extremal Black Holes}.
\newblock 2012, 1206.6598.

\bibitem{boulware73}
D.~G. Boulware.
\newblock Naked singularities, thin shells, and the reissner-nordstr\"om
  metric.
\newblock {\em Phys. Rev. D}, 8:2363--2368, Oct 1973.

\bibitem{Carter73}
B.~Carter.
\newblock Black hole equilibrium states.
\newblock In {\em Black holes/Les astres occlus (\'Ecole d'\'Et\'e Phys.
  Th\'eor., Les Houches, 1972)}, pages 57--214. Gordon and Breach, New York,
  1973.

\bibitem{couch84}
W.~Couch and R.~Torrence.
\newblock Conformal invariance under spatial inversion of extreme
  {R}eissner-{N}ordstr\"om black holes.
\newblock {\em General Relativity and Gravitation}, 16:789--792, 1984.
\newblock 10.1007/BF00762916.

\bibitem{Dafermos:2008en}
M.~Dafermos and I.~Rodnianski.
\newblock {Lectures on black holes and linear waves}, 2008, 0811.0354.

\bibitem{Dafermos:2010hd}
M.~Dafermos and I.~Rodnianski.
\newblock {The black hole stability problem for linear scalar perturbations}.
\newblock 2010, 1010.5137.

\bibitem{Dimock:1987hi}
J.~Dimock and B.~S. Kay.
\newblock {Classical and quantum scattering for linear scalar fields on the
  Schwarzschild metric I.}
\newblock {\em Annals Phys.}, 175:366, 1987.

\bibitem{Dotti:2010uc}
G.~Dotti and R.~J. Gleiser.
\newblock {Gravitational instability of the inner static region of a
  Reissner-Nordstrom black hole}.
\newblock {\em Class.Quant.Grav.}, 27:185007, 2010, 1001.0152.

\bibitem{Friedlander}
F.~G. Friedlander.
\newblock {\em The wave equation on a curved space-time}.
\newblock Cambridge University, Cambridge, 1975.

\bibitem{Gilbarg}
D.~Gilbarg and N.~S. Trudinger.
\newblock {\em Elliptic Partial Differential Equations of Second Order}.
\newblock Springer-Verlag, Berlin, 2001.
\newblock Reprint of the 1998 edition.

\bibitem{Hawking73}
S.~W. Hawking and G.~F.~R. Ellis.
\newblock {\em The large scale structure of space-time}.
\newblock Cambridge University Press, Cambridge, 1973.

\bibitem{Kay:1987ax}
B.~S. Kay and R.~M. Wald.
\newblock {Linear stability of Schwarzschild under perturbations which are
  nonvanishing on the bifurcation two sphere}.
\newblock {\em Class.Quant.Grav.}, 4:893--898, 1987.

\bibitem{Lucietti:2012sf}
J.~Lucietti and H.~S. Reall.
\newblock {Gravitational instability of an extreme Kerr black hole}, 2012,
  1208.1437.

\bibitem{Moncrief:1975sb}
V.~Moncrief.
\newblock {Gauge-invariant perturbations of Reissner-Nordstrom black holes}.
\newblock {\em Phys.Rev.}, D12:1526--1537, 1975.

\bibitem{0264-9381-7-6-008}
A.~Ori.
\newblock The general solution for spherical charged dust.
\newblock {\em Classical and Quantum Gravity}, 7(6):985, 1990.

\bibitem{Regge:1957td}
T.~Regge and J.~A. Wheeler.
\newblock {Stability of a Schwarzschild singularity}.
\newblock {\em Phys.Rev.}, 108:1063--1069, 1957.

\bibitem{wald:1056}
R.~M. Wald.
\newblock Note on the stability of the schwarzschild metric.
\newblock {\em Journal of Mathematical Physics}, 20(6):1056--1058, 1979.

\bibitem{wald:218}
R.~M. Wald.
\newblock Erratum: Note on the stability of the schwarzschild metric.
\newblock {\em Journal of Mathematical Physics}, 21(1):218--218, 1980.

\bibitem{Zerilli:1974ai}
F.~Zerilli.
\newblock {Perturbation analysis for gravitational and electromagnetic
  radiation in a Reissner-Nordstr\"om geometry}.
\newblock {\em Phys.Rev.}, D9:860--868, 1974.

\bibitem{Zerilli:1970se}
F.~J. Zerilli.
\newblock {Effective potential for even parity Regge-Wheeler gravitational
  perturbation equations}.
\newblock {\em Phys.Rev.Lett.}, 24:737--738, 1970.

\end{thebibliography}

\end{document}